\documentclass[lettersize,journal]{IEEEtran}
\usepackage{amsmath,amsfonts,mathtools}
\usepackage{algorithmic,array}
\usepackage{textcomp, stfloats, url, verbatim}
\usepackage{graphicx,pgfplots,tikzscale,tikz,color,xcolor}
\usepackage{cite}
\usepackage[standard]{ntheorem}
\usepackage[caption=false,font=normalsize,labelfont=sf,textfont=sf]{subfig}
\newcommand{\R}{{\mathbb R}}

\newcommand{\mc}{\mathcal}
\DeclareMathOperator{\diag}{diag}
\DeclareMathOperator{\diff}{d}
\newcommand{\ddt}{\tfrac{\diff}{\diff \!t}}
\newtheorem{assumption}{Assumption}

\def\showpgfsquare{\tikz[baseline=-0.75ex]\node[black,mark size=0.5ex]{\pgfuseplotmark{square*}};}
\def\showpgftriangle{\tikz[baseline=-0.5ex]\node[black,mark size=0.65ex]{\pgfuseplotmark{triangle*}};}
\newcommand\drawline[1][black]{%
  \raisebox{2pt}{%
    \tikz \draw[#1,line cap=butt] (0pt,0pt) -- (10pt,0pt);%
  }%
}

\allowdisplaybreaks

\definecolor{blue}{rgb}{0.00000,0.44700,0.74100}%
\definecolor{orange1}{rgb}{0.85000,0.32500,0.09800}%
\definecolor{orange2}{HTML}{fcbe95}
\definecolor{yellow}{rgb}{0.92900,0.69400,0.12500}%
\definecolor{purple}{rgb}{0.49400,0.18400,0.55600}%
\definecolor{green}{rgb}{0.46600,0.67400,0.18800}%
\definecolor{lightblue}{rgb}{0.30100,0.74500,0.93300}%
\definecolor{magenta}{HTML}{CE2842}

\begin{document}

\title{Input-Output Specifications and Dynamic Droop Coefficients: Stability and Performance Conditions for Grid-Forming IBRs}

\author{Jennifer T. Bui, Dominic Gro\ss}



\maketitle

\begin{abstract}
This paper proposes dynamic stability and performance conditions for grid-connected inverter-based resources (IBRs).  To this end, we extend the notion of steady-state droop coefficients to dynamic droop coefficients to capture the small-signal dynamics of IBRs and synchronous generators (SGs). Notably, the dynamic droop coefficients can be obtained from input-output data collected at the unit's (e.g., IBR or SG) point of interconnection without requiring prior knowledge of IBR internals or controls structure. To obtain frequency stability conditions, this IBR model is combined with a lightweight dynamic transmission network model that accounts for uncertainty of line dynamics. The resulting stability conditions are highly scalable and, given a few key network parameters, can be verified at the unit level. To make the conditions practical and offer intuitive and illustrative interpretations, we map the frequency stability conditions to bounds on the Bode plot of the dynamic droop coefficient for two broad types of IBR responses. Moreover, our specifications on the dynamic droop coefficient (i) translate basic frequency control ancillary services into verifiable requirements, and (ii) provide insights into the much-debated question of how to certify an IBR as grid-forming (GFM). The results are illustrated using dynamic droop coefficients obtained using detailed simulations of GFM and GFL IBRs as well as SGs. 
\end{abstract}


\section{Introduction}

Electric power systems are undergoing an unprecedented transition toward large-scale integration of renewable generation and energy storage interfaced by power electronics. Replacing conventional synchronous generators (SGs) with inverter-based resources (IBRs) results in significantly different power system dynamics and challenges standard operating paradigms and controls on timescales from seasons to milliseconds. In the context of power system dynamics, the heterogeneous dynamics of renewables represent a substantial barrier to their large-scale deployment. Specifically, scalable and reliable operation of today's systems crucially relies on the homogeneous physics and controls of SGs~\cite{PM2020}. In contrast, the dynamics of IBRs vastly differ across technologies, resulting in interoperability concerns that are a major obstacle to replacing few large and approximately homogeneous SGs with many small and heterogeneous IBRs.

Typically, control strategies for IBRs are broadly categorized into (i) grid-following (GFL) controls that require a stable ac voltage at their point of interconnection (i.e., ensured by SGs), and (ii) grid-forming (GFM) controls that impose a stable and self-synchronizing ac voltage at their terminals. Hence, GFM IBRs are envisioned to be the cornerstone of future power systems~\cite{gfmkeyetal}. Prevalent GFM controls include droop control~\cite{CDA1993}, virtual synchronous machine (VSM) control~\cite{DSF2015}, and dispatchable virtual oscillator (dVOC) control~\cite{GCB+2019}. Notably, large-signal stability conditions for homogeneous IBRs have been established using quasi-steady-state models for droop control \cite{Schiffer+2014} and, more comprehensively, accounting for transmission dynamics and inner controls for dVOC \cite{Subotic+2021}.

Since droop, dVOC, and VSM controls can be tuned such that their reduced-order models coincide \cite[Fig.~7]{DG2023}, small-signal stability conditions for homogeneous IBR dynamics \cite{compensate} may extend to networks of heterogeneous IBRs. However, this approach does not establish formal interoperability guarantees for IBRs with heterogeneous dynamics and does not address concerns around interoperability of GFM IBRs with SGs and GFL IBRs. Crucially, operating emerging power systems requires GFM IBRs to replace functions of SGs while accommodating (legacy) GFL IBRs for renewables and reliably operating HVDC transmission~\cite{GSA+2021}. At the same time, a wide range of adverse interactions between SGs and GFM IBRs~\cite{CTG+2020}, SGs and GFM and GFL IBRs~\cite{MSA+2021}, and GFM and GFL IBRs~\cite{DWT+2023} have been reported, highlighting the challenges arising from complex interactions between layered control structures and network components~\cite{MSA+2021, Wan+2015}.

Well-known methods relying on eigenvalue sensitivity~\cite{eigen+2007, Marta+2020} and impedance models~\cite{LXL2018} can be used to study the small-signal stability of power systems. However, these methods typically require constructing a model of the entire system to identify oscillatory and unstable modes, resulting in scalability issues for systems with many generation units. While data-driven impedance modeling reduces modeling requirements, scalability challenges still remain for large networks. To overcome scalability challenges, decentralized methods have been proposed that, broadly speaking, use a coarse model of the transmission network to develop stability conditions that can be verified for individual units. For instance, \cite{compensate,GCB+2019} provides stability conditions for homogeneous IBRs that rule out adverse interactions with transmission line dynamics. Decentralized stability conditions are presented in~\cite{GainPhase} that compare the gain and phase of the unit admittance matrices against those of the network admittance. To evaluate this condition in mixed GFM / GFL systems, the overall power system is decomposed into two subsystems, one consisting of GFM IBRs and SGs and another one with only GFL IBRs, such that the former serves as an equivalent grid for the stability conditions on the latter. While significantly improving scalability, this approach nonetheless requires constructing partial models of the overall grid.

Besides ensuring stability, performance (e.g., frequency regulation) is of importance and typically enforced via standards and requirements for ancillary services. One of the earliest attempts to formalize performance specifications for GFM IBRs evaluated the frequency disturbance rejection capability of a unit~\cite{powertech}. In particular,~\cite{powertech} recovered the transfer function between frequency oscillations from an ac voltage source and the IBR frequency control reference to evaluate the IBR disturbance response.  Although the results relied on an internal signal that is not externally measurable, they provided insight into developing specifications based on input-output data via a frequency-gridding approach. By instead leveraging experimentally accessible signals such as phase angle, voltage magnitude, and power, this approach allows for studying subsynchronous oscillations attributed to the outer control loops of IBRs that largely determine the interactions between interconnected units. While impedance models can capture the fast dynamics of IBRs, they are only linear in the controller $dq$ frame coordinates corresponding to the inner control loops.

The main contributions of this work is a simple data-enabled modeling and analysis framework for evaluating local frequency control performance and certifying small-signal frequency stability of networks of heterogeneous units (e.g., IBRs and SGs). To this end, we leverage data-enabled input-output modeling to extend the definition of steady-state droop coefficients to dynamic droop coefficients that can intuitively be represented through Bode plots. Next, we present a method for recovering input-output models using simulations or experiments that does not require knowledge of the internal hardware and controls structure.  We formulate a decentralized stability condition that provides bounds on the unit dynamics by extending the result in~\cite{PM+2019} to include network circuit dynamics.  Our conditions also account for the impact of modeling uncertainty in the network circuit dynamics (i.e., transmission lines).  In addition, we propose unit-level performance specifications that translate common frequency support functions into quantifiable conditions on the dynamic droop coefficients. We apply our conditions to a variety of generation technologies and illustrate the use of the proposed framework to (i) characterize the impact of the minimum network inductance on stability, (ii) delineate between GFM and GFL IBRs, and (iii) illustrate the impact of network circuit dynamics.

\section{Motivation and Problem Setup\label{sec:motivation}}
\begin{figure}[b]
  \centering
  \includegraphics[width = 3.5in]{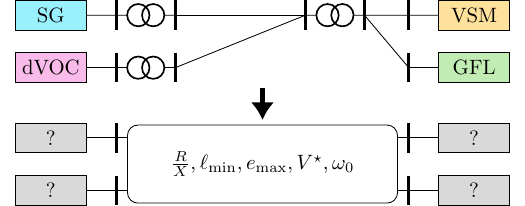}
  \vspace*{3mm}
  \caption{Abstraction of an original network (top) through an analytical model (bottom) using the minimum line inductance $\ell_{\min}$, resistance-to-reactance ratio $\frac{R}{X}$, maximum number of lines $e_{\max}$ connected to a bus, and nominal voltage magnitude and frequency $V^\star$ and $\omega_0$.
  \label{fig:abstracted-network}}
\end{figure}

To ensure frequency stability of large-scale systems of heterogeneous units (i.e., IBRs and SGs), we focus on developing scalable specifications on the individual unit dynamics. Broadly speaking, SGs and GFM IBRs synchronize with other units in the grid through their local control response. In this sense, restrictions on the unit dynamics also need to account for the grid characteristics. To this end, we consider (i) a network model that, through a few key network parameters, abstractly captures how the network circuit (e.g., topology, line parameters) impacts stability, and (ii) technology-agnostic unit-level specifications that each unit should meet for a given set of network parameters. The key network parameters are the $R/X$ ratio, the smallest outgoing inductance $\ell_{\min}$ between units in the Kron-reduced network (i.e., when eliminating buses without SGs/IBRs), and the largest number of outgoing lines $e_{\max}$. These parameters, in abstraction, model connectivity and coupling strength (i.e., short-circuit ratio) for the purpose of developing analytical stability certificates (see Fig.~\ref{fig:abstracted-network}).

In practice, the details of the internal control or hardware implementation of a unit may not be known. Thus, we impose specifications on the input-output dynamics of units represented in the frequency domain. In particular, the dynamic response of bus voltages (i.e., frequency and magnitude) to network power injections (i.e., active and reactive power) is captured at discrete frequencies $\omega_p$ (see Fig.~\ref{fig:decentralized-condition}) using data collected in simulation or experiments.

\begin{figure}[t!!]
  \centering
  \includegraphics[width = 0.8\columnwidth]{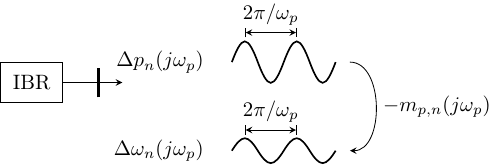}
  \caption{The gain and phase shift of the frequency deviation $\Delta \omega_n$ at bus $n$ relative to the active power $\Delta p_n$ at frequency $\omega_p$ is captured through the dynamic droop coefficient $m_{p,n}(j \omega_p) \in \mathbb{C}$.\label{fig:decentralized-condition}}
\end{figure}

This frequency domain model of each unit can then be certified against the stability conditions and performance specifications in isolation to (i) certify small-signal frequency stability for all network circuits with a fixed resistance-to-reactance ratio $\R/X$ that are within the bounds parameterized by $\ell_{\min}$ and $e_{\max}$, and (ii)  understand the impact of the network parameters on stability.

While the system may still be stable despite not meeting the decentralized stability conditions, the lack of a-priori guarantees that are independent of the precise network parameters results in scalability and robustness concerns. In other words, the system may be unstable or certifying stability may require exact system (i.e., network and all units) models that are unavailable or may change in real-time. This second approach is not scalable nor amenable to developing grid codes. In contrast, decentralized stability certificates can directly inform standards to ensure stable interconnection for a wide range of system configurations based on unit-level stability certificates.

\section{Power System Modeling \label{sec:model}}

\subsection{Network topology}
We model the power system as an undirected graph $\mc G=(\mc N,\mc E)$ with nodes $\mc N$ corresponding to $|\mc N|$ buses and edges $\mc E \subseteq \mc N \times \mc N$ corresponding to $|\mc E|$ transmission lines. Here, $|\mc X|$ denotes the cardinality of a set $\mc X \subset \mathbb{N}$. To every bus $n \in \mc N$, we associate a voltage phase angle $\theta_n$, frequency $\omega_n=\ddt \theta_n$, voltage magnitude $V_n$, and active and reactive power injections $p_n$ and $q_n$. To every line $m \in \{1,\ldots,|\mc E|\}$ connecting buses $(n,k) \in \mc E$ we associate an active and reactive power $\bar{p}_{n,k}$ and $\bar{q}_{n,k}$. Thus, the bus power injections are given by $p_n= \sum_{k: (n,k) \in \mathcal{E}}  \bar{p}_{n,k}$ and $q_n= \sum_{k: (n,k) \in \mathcal{E}} \bar{q}_{n,k}$. In the remainder, the network topology is encoded by an oriented incidence matrix $B \in \{-1,0,1\}^{|\mc N| \times |\mc E|}$ (see~\cite{FB-LNS}). 


\subsection{Dynamic droop model of IBR and SG dynamics}\label{subsec:dyndroop}
We model the dynamics of IBRs and SGs at bus $n \in \mc N$ through the transfer function model
\begin{align}\label{eq:busdyn}
  \begin{bmatrix} \Delta \omega_n (s) \\ \Delta V_n(s) \end{bmatrix} =
  -\underbrace{\begin{bmatrix}
    m_{p,n}(s) & \zeta_{q,n}(s) \\ \zeta_{p,n}(s) & m_{q,n}(s)
  \end{bmatrix}}_{\eqqcolon M(s)}
  \begin{bmatrix} \Delta p_n(s) \\ \Delta q_n(s) \end{bmatrix},
\end{align}
where $\Delta(\cdot)$ denotes deviation of a signal from its set point. We emphasize that \eqref{eq:busdyn} simply encodes the dynamic relationship between the four signals and the model structure is not necessarily tied to the inputs or outputs of the IBR control. For instance, a GFL IBR may control the power injection as a function of frequency and voltage magnitude, while a GFM IBR typically controls frequency and voltage magnitude as a function of power. The dynamic droop model \eqref{eq:busdyn} is complementary to well-studied impedance models~\cite{eigen+2007,LXL2018} and extends the common definition of steady-state droop coefficients to dynamic droop coefficients that fully characterize the IBR small-signal response below the nominal line frequency (e.g., subsynchronous oscillations).

We refer to the diagonal elements $m_{p,n}(s)$ and $m_{q,n}(s)$ of $M(s)$ as the dynamic droop coefficients that reflect the $P \text{-} f$ and $Q \text{-} V$ relationships, respectively. Notably, the commonly used steady-state droop coefficients $m_{p,0} \in \mathbb{{R}}_{>0}$ and $m_{q,0} \in \mathbb{{R}}_{>0}$ can be recovered as $\lim_{\omega_p \rightarrow 0} m_{p,n}(j\omega_p)=m_{p,0}$ and $\lim_{\omega_p \rightarrow 0} m_{q,n}(0)=m_{q,0}$. The gains $|m_{p,n}(j\omega_p)|$ and $|m_{q,n}(j\omega_p)|$ specify the change in magnitude of frequency and voltage magnitude in response to active and reactive power oscillations of frequency $\omega_p$. Moreover, $\angle m_{p,n}(j\omega f_p)$ and $\angle m_{q,n}(j\omega_p)$ captures the corresponding phase shift.

Examples of different $P \text{-} f$ droop behaviors in time-domain and their corresponding dynamic droop coefficients $m_{p,n}(j \omega_p) \in \mathbb{C}$ are illustrated in Fig.~\ref{fig:dyn_droop}. For instance, a $5\%$ per unit droop $m_{p,n}(j \omega_p)=0.05$~pu results in a $0.005$~pu bus frequency deviation for a $0.1$~pu active power oscillation at $\omega_p$. Meanwhile, $m_{p,n}(j\omega_p)=0.03$~pu results in a smaller frequency deviation, indicating increased damping. For both cases, $m_{p,n}(j\omega_p)$ has zero phase shift relative to conventional $P\text{-}f$ droop constants. In contrast, $m_{p,n}(j\omega_p)=0.03 e^{j\pi/2}$ results in a $\angle m_{p,n}(j\omega_p) = 90^\circ$ phase shift relative to the standard $P \text{-} f$ droop, i.e., droop with an incorrect sign for half the oscillation. If the phase exceeds $\pm 90^\circ$, then the IBR droop has incorrect sign for more than half the oscillation resulting in \emph{negative damping} on average over a full cycle of the oscillation. Meanwhile, the off-diagonal elements, $\zeta_{p,n}(s)$ and $\zeta_{q,n}(s)$, model cross-coupling (i.e., $P\text{-}V$ and $Q\text{-}f$).
\begin{figure}[h!!!!]
  \centering
  \includegraphics[width=\columnwidth]{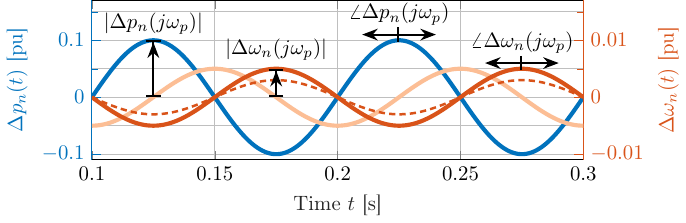}
  \caption{
    Change in instantaneous active power injection $\Delta p_n(t)$ (\drawline[blue, line width=1.3pt]) for an $\omega_p = 2\pi10$~Hz oscillation along with the change in frequency $\Delta \omega_n(t)$ corresponding to three different active power dynamic droop coefficients $m_{p,n}(j\omega_p)=0.05$ pu (\drawline[orange1, line width=1.3pt]), $m_{p,n}(j\omega_p)=0.03$ pu (\drawline[orange1, dashed, line width=1.1pt]), and $m_{p,n}(j\omega_p)=0.05e^{j\pi/2}$ pu (\drawline[orange2, line width=1.3pt]).
    \label{fig:dyn_droop}}
\end{figure}

\subsection{Transmission Line}
Analogously to the unit model in \eqref{eq:busdyn}, the power flowing into a transmission line $m \in \{1,\ldots,|\mc E|\}$ connecting buses $(n,k) \in \mc E$ with nominal bus voltages $V^\star_{n}, V^\star_{k} \in\mathbb{R}_{>0}$ and resistance-inductance ratio $\rho_m=\frac{r_m}{l_m} \in \mathbb{R}_{>0}$ is modeled by
\begin{align}\label{eq:netdyn}
  \!\!\begin{bmatrix} \Delta \bar{p}_{n,k}(s) \\ \Delta  \bar{q}_{n,k}(s) \end{bmatrix} =
   \kappa_m \mu_m(s)
   \begin{bmatrix}
     1& \frac{\rho_m+s}{\omega_0 V^\star_{k_m}} \\ \frac{\rho_m+s}{\omega_0} & \frac{1}{V^\star_{k_m}}
   \end{bmatrix}
  \begin{bmatrix} \Delta \theta_{n,k}(s)\\ \Delta  V_{n,k}(s) \end{bmatrix}\!,\!
\end{align}
with voltage differences $\Delta  \theta_{n,k} \coloneqq \Delta \theta_n-\Delta \theta_k$, $\Delta  V_{n,k} \coloneqq \Delta V_n- \Delta  V_k$, nominal  frequency $\omega_0 = 2\pi f_0$, and 
\begin{align}
  \kappa_m \coloneqq \frac{\omega_0 V^\star_n V^\star_k}{\ell_m},  \quad \mu_m(s) \coloneqq \frac{1}{s^2+2 \rho_m s+\omega_0^2+\rho^2_m}.
  \label{eq:line-dyn}
\end{align}
%
This model coincides with the common DC power flow model for low frequencies, but also captures network circuit dynamics that play a significant role in stability analysis of IBR-dominated power systems~\cite{GCB+2019,compensate}. For analysis purposes, we define the DC gain $\mu_0 = \lim_{f_p \to 0} |\mu(j 2\pi f_p )|$ and peak gain $\hat{\mu} = |\mu(j 2\pi f_0)|$. Moreover, we define $\epsilon_l \in \mathbb{R}_{>0}$, $\delta_l \in \mathbb{R}_{>0}$, and $\nu_l \in \mathbb{R}_{>0}$ and frequencies $f_{\epsilon}$, $f_\delta$, and $f_{\nu}$ such that
\begin{subequations}\label{eq:mubound}
  \begin{align}
  |\mu(j 2\pi f_p)| &\leq \mu_0 + \epsilon_{l}, \quad &\forall f_p &\leq f_{\epsilon},\label{eq:mubound:feps}\\
  \angle \mu(j2\pi f_p) &\geq -\delta_l, \quad &\forall f_p &\leq f_{\delta},\label{eq:mubound:fdelta}\\
  \angle \mu(j2\pi f_p) &\leq \nu_l -\pi, \quad &\forall f_p &\geq f_{\nu}.
\end{align}
\end{subequations}

While transmission lines also exhibit significant higher-order circuit dynamics~\cite{D1985,MGT1999}, it is often difficult to obtain accurate models and parameters beyond the model \eqref{eq:netdyn}. Thus, to account for higher-order circuit dynamics neglected in \eqref{eq:line-dyn}, we consider a line model with additive uncertainty~\cite[Ch.~7.4.2]{skogestad1996multivariable}. To this end, consider the frequency dependent weight $W_m(s) \coloneqq \beta_m   \tfrac{\omega_{\Delta,m} s}{s+\omega_{\Delta,m}}$ and the norm bounded uncertainty $||\Delta_m(s)||_{\infty} \leq 1$. This results in
\begin{align}\label{eq:muaddunc}
 \mu_{\Delta,m}(s) & = \mu_m(s) +  W_m(s) \Delta_m(s),
\end{align}
and it follows that $|\mu_{\Delta,m}(j\omega_p)-\mu_m(j\omega_p)| \leq |W_m(j\omega_p)|$. The normalized nominal line model $\mu_m(s)$, weight $W_m(s)$, and maximum gain of $| \mu_{\Delta,m}(s)|$ are illustrated in Fig.~\ref{fig:line-uncertainty}. Notably, the uncertainty is approaching its maximum at the corner frequency $\omega_{\Delta,m}$. Thus, a smaller corner frequency will decrease the maximum gain of the uncertainty. In that case, $\beta_m$ can be increased to model larger uncertainty. 
 
\begin{figure}[b!!!]
  \centering
  \includegraphics[width=\columnwidth]{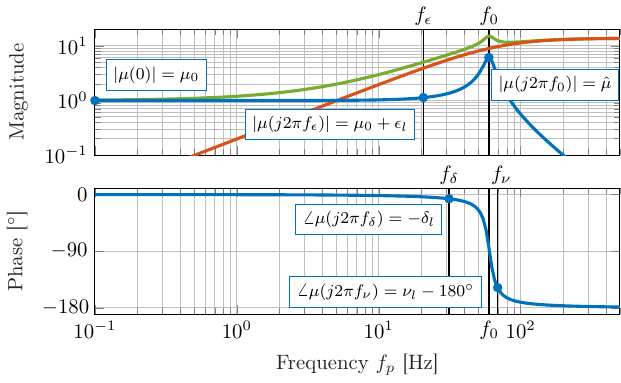}
  \caption{
    Normalized dynamic line model $\mu(j2\pi f_p)$ (\drawline[blue, line width=1.3pt]) for $\rho = 0.0817$ pu, weight $|W(j2\pi f_p)|$ (\drawline[orange1, line width=1.3pt]), and maximum gain of $\mu_{\Delta}(j2\pi f_p)$ (\drawline[green, line width=1.3pt]).
  \label{fig:line-uncertainty}}
\end{figure}

\subsection{Multi-IBR/multi-SG system frequency dynamics}
The following assumption is often justified for transmission systems and used to simplify the frequency dynamics.
\begin{assumption}{\bf(Decoupled active and reactive power)}\label{ass.coupling}
  We assume that active and reactive power are decoupled, i.e., the off-diagonal elements in \eqref{eq:netdyn} are negligible.
\end{assumption}
We use $\psi_n \in \mathbb{R}_{>0}$ to denote the unit rating at every bus $n \in \mc N$ relative to the system base and the matrix $\Psi \coloneqq \diag\{\psi_n\}_{n=1}^{|\mc N|}$ collecting unit ratings. We define the matrix $G_{\omega,p}(s) \coloneqq\diag\{m_{p,n}(s)\}_{n=1}^{|\mc N|}$ of bus transfer functions with $m_{p,n}(s)$ normalized by $\psi_n$. The vector of bus power injections $\Delta p \in \mathbb{R}^{|\mc N|}$ is then given by $\Delta p = B G_{\omega,p}(s) B^\mathsf{T} \Delta \theta$ and $\delta_p(s)$ denotes the vector of load perturbations.  Finally, we define the matrix of transmission line transfer functions $G_{p,\theta}(s) \coloneqq \diag\{\kappa_m \mu_m(s)\}_{m=1}^{|\mc E|}$.  The small-signal frequency dynamics from are shown in Fig.~\ref{fig:smallsignal}. We emphasize that this model can capture the frequency dynamics of a wide range of units from GFM and GFL IBRs to SGs. 

\begin{figure}[h!!]
  \centering
  \includegraphics[width=0.7\columnwidth]{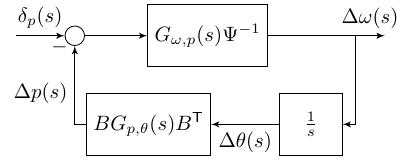}
  \caption{Small-signal frequency dynamics of the multi-IBR/multi-SG system.
  \label{fig:smallsignal}}
\end{figure}

To simplify the analysis, we make the following assumption that is typically justified for lines at the same voltage level.
\begin{assumption}{\bf(Uniform resistance-inductance ratio)}\label{ass.constant.ratio}
The resistance-inductance ratio is identical for all lines, i.e., $\rho_m=\rho$ for all $m \in \{1,\ldots,|\mc E|\}$.
\end{assumption}
Under Assumption~\ref{ass.constant.ratio}, $\mu(s)=\mu_m(s)$ for all $m \in \{1,\ldots,|\mc E|\}$. Letting $K \coloneqq \diag\{\kappa_m\}_{m=1}^{|\mc E|}$, the network circuit dynamics simplify to $B G_{p,\theta}(s) B^\mathsf{T}=\mu(s) B K B^\mathsf{T}$, where $L \coloneqq B K B^\mathsf{T}$ is the graph Laplacian of the network. We emphasize that the inductance of each line $m$ is preserved by $\kappa_m$ in \eqref{eq:line-dyn}.

\section{Data-Driven Modeling \label{sec:validation}}
Conceptually, well-known system identification methods could be used to obtain the bus transfer functions \eqref{eq:busdyn}. However, these typically fit a model with prescribed numbers of poles and zeros to the data, which are typically unknown in this application. Instead, we connect the unit under test (UUT) to a controlled ac voltage source (see Fig.~\ref{fig:experiment}) to obtain its response at discrete perturbation frequencies $\omega_p\in \mathbb{R}_{>0}$.

\subsection{Data-driven dynamic droop model}
To recover the dynamic droop model \eqref{eq:busdyn}, we follow a three-step procedure.
\begin{figure}[t!!]
  \centering
  \includegraphics[width=3.5in]{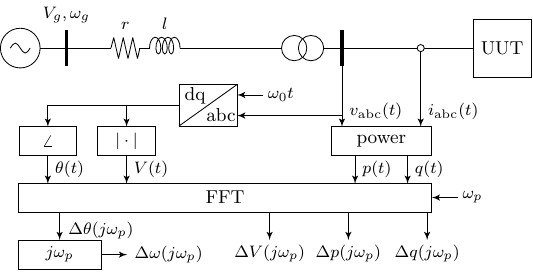}
  \caption{Two-bus system and signal processing for capturing frequency response models.}
  \label{fig:experiment}
\end{figure}
\subsubsection{Probing signal} 
To excite the UUT, the ac voltage source is imposing the voltage $V_{g,\text{abc}}(t) = V_g(t) \cos(\theta_g(t)+\bar{\theta}) \in \mathbb{R}^3$, where $\bar{\theta}=(0,-\frac{2}{3}\pi,\frac{2}{3}\pi)$ and $V_g(t)$ and $\theta_g(t) = \int_{0}^\tau \omega_g(\tau) \diff \tau$ denote the voltage magnitude and phase given by
\begin{subequations}
  \begin{align}
    V_g(t) &= V^\star + A_V \sin(\omega_p t),\\
    \omega_g(t) &= \omega_0 + A_\omega \sin(\omega_p t).
  \end{align}
\end{subequations}
Here $V^\star \in \mathbb{R}_{>0}$ denotes the nominal voltage magnitude, while $A_V$ and $A_\omega$ denote the amplitude of the perturbation with frequency $\omega_p$.  The amplitudes are chosen such that the resulting oscillations in the measured signals are large enough to sufficiently excite the UUT dynamics but small enough to remain in the realm of small-signal modeling (e.g., do not activate limiters).  To recover the dynamic droop model \eqref{eq:busdyn}, for each discrete frequency $\omega_p$, data is collected by individually perturbing the voltage magnitude (i.e., $A_V \in \mathbb{R}_{>0}$ and $A_\omega = 0$) and frequency (i.e., $A_V=0$ and $A_\omega \in \mathbb{R}_{>0}$).

\subsubsection{Processing of UUT input/output signals}
The three-phase voltage $v_{\text{abc}}$ and current $i_{\text{abc}}$ at the UUT terminal (see Fig.~\ref{fig:experiment}) are used to compute the active and reactive power $p$ and $q$. Moreover, the magnitude $V$ and phase angle $\theta$ of $v_{\text{abc}}$ are computed in a synchronous reference frame rotating at the nominal frequency $\omega_0$. Next, a Fourier analysis is performed to obtain the amplitude and phase of the oscillations of the four signals at the frequency $\omega_p$. This allows for representing the magnitude and phase of the resulting oscillations as complex numbers $\Delta \theta(j\omega_p)$, $\Delta V(j\omega_p)$, $\Delta p(j\omega_p)$, and $\Delta q(j\omega_p)$. Finally, we compute frequency $\Delta {\omega}(j\omega_p) = j\omega_p \Delta {\theta}(j\omega_p)$. Crucially, we only require measurements at the UUT terminal and do not require any knowledge of internal signals, controls, or the hardware topology of the UUT.

\subsubsection{Recovering the frequency response model}
The results of the experiments at every frequency are collected in vectors
\begin{align*}
  Y(j \omega_p) &=     \begin{bmatrix}
    \Delta{\omega}_{\omega}(j\omega_p) & \Delta{\omega}_{V}(j\omega_p) \\ \Delta{V}_{\omega}(j\omega_p) & \Delta{V}_{V}(j\omega_p)
  \end{bmatrix},\\ U(j \omega_p)&=     \begin{bmatrix}
    \Delta{p}_{\omega}(j\omega_p) & \Delta{p}_{V}(j\omega_p) \\ \Delta{q}_{\omega}(j\omega_p) & \Delta{q}_V(j\omega_p)
  \end{bmatrix},
\end{align*}
where the columns correspond to perturbations of the ac source frequency and voltage, respectively. Then, at any given frequency $\omega_p$, the dynamic droop model \eqref{eq:busdyn} is recovered by noting that $M (j \omega_p)= -Y(j \omega_p) U^{-1}(j \omega_p)$.

\subsection{Examples: SG and IBRs}\label{subsec:example}
The setup shown in Fig.~\ref{fig:experiment} has been implemented in an EMT simulation environment in MATLAB/Simulink. The Bode plot of the dynamic droop coefficient $m_{p}(j2\pi f_p)$ for a sixth-order SG model with turbine/governor system, automatic voltage regulator, and power system stabilizer is shown in Fig.~\ref{fig:dynamics-results}. The DC gain of $|m_{p}(j2\pi f_p)|$ for the SG matches its droop setting of $0.05$~pu. As $f_p$ increases, the limited bandwidth of the turbine/governor system first results in an increase of $|m_{p}(j2\pi f_p)|$ and then the SG's inertia response results in a decrease of $|m_{p}(j2\pi f_p)|$. Beyond approximately $2~$Hz, the dynamics of the machine stator dominate the response resulting in an increase of $|m_{p}(j2\pi f_p)|$. Finally, $\angle m_{p}(j2\pi f_p)$ is mostly contained within $\pm90^\circ$, i.e., the $P\text{-}f$ dynamics of the SG are largely dissipative in the frequency range of interest.

Moreover, three GFM IBRs and two GFL IBRs using a synchronous reference frame (SRF) phase-locked loop (PLL) are studied. Fig.~\ref{fig:dynamics-results} presents the Bode plots of $m_{p}(j2\pi f_p)$ for the five IBR models configured for $5\%$ steady-state $P\text{-}f$ droop.
\begin{figure}[t!]
    \centering
    \includegraphics[width = \columnwidth]{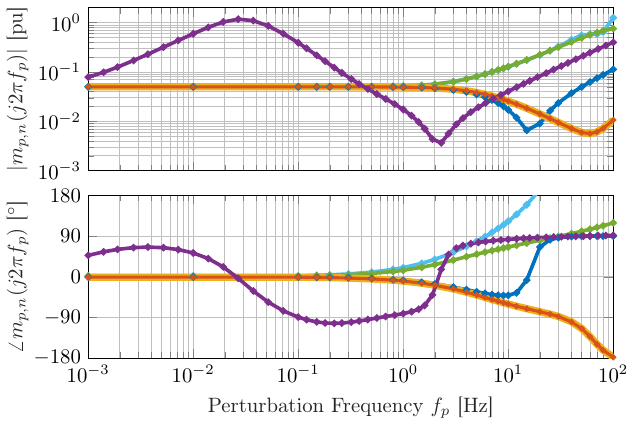}
    \caption{Recovered dynamic droop coefficients for 
    GFM droop without inner-loops (\drawline[blue, line width=1.3pt]), 
    GFM droop with inner-loops (\drawline[orange1, line width=1.3pt]),
    GFM dVOC control (\drawline[yellow, line width=1.3pt]),
    a SG (\drawline[purple, line width=1.3pt]),
    GFL SRF-PLL (\drawline[green, line width=1.3pt]),
    GFL SRF-PLL with delay (\drawline[lightblue, line width=1.3pt]).
    \label{fig:dynamics-results}} 
\end{figure}
Firstly, it can be observed from the Bode magnitude plot that all IBRs provide the desired steady-state droop of $0.05~$pu. Differences between GFM IBRs and GFL IBRs appear after approximately $2~$Hz. In particular, all three GFM IBRs exhibit increased frequency damping (i.e., decreasing gain) while the GFL IBRs exhibit decreasing frequency damping (i.e., increasing gain). The responses of the droop and dVOC GFM IBRs with dual-loop current and voltage control~\cite[Fig.~4b]{acpower} have identical responses and only exhibit increasing gain past the line frequency (i.e., $60$~Hz) when the inner loops no longer fully compensate their LC filter dynamics. In contrast, GFM droop without inner controls, exhibits increasing gain beyond $5$~Hz when its LC filter dynamics dominate the response. In other words, GFM IBRs with inner control loops can maintain the desired gain roll-off to higher frequencies.

\begin{figure*}[bh!!!]
    \centering
    \includegraphics[width = 1\textwidth]{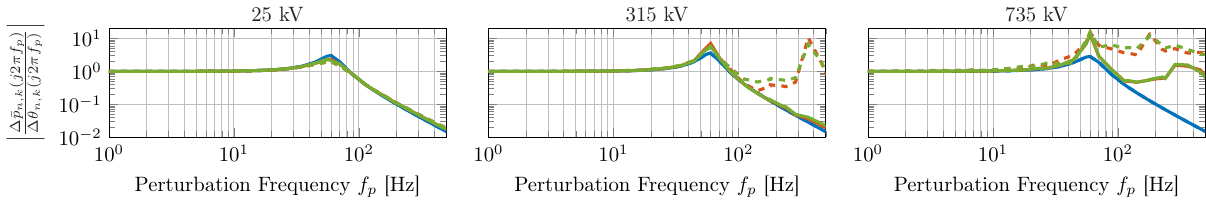}
    \caption{Recovered dynamics for transmission lines based on lumped-parameter (\drawline[blue, line width=1.3pt]),
  distributed-parameter (\drawline[orange1, line width=1.3pt]),
  frequency-dependent-parameter (\drawline[green, line width=1.3pt]) models, in which results for short lines are in solid and long lines are dashed.
    \label{fig:lines-diagonals}} 
\end{figure*}

However, the SG and GFM IBR without inner loops largely stay within $\pm90^\circ$, maintaining dissipative responses that avoid negative damping at the expense of reduced high-frequency oscillation damping (i.e., larger gain). In contrast, the GFM IBRs with inner loops and GFL IBRs exhibit phase shifts larger than $\pm 90^\circ$ beyond $30~$Hz (GFM), $20$~Hz (GFL), and $6~$Hz (GFL with delay) resulting in negative damping (see Sec.~\ref{subsec:dyndroop}). Crucially, the negative damping of GFM IBRs with inner loops beyond $30$~Hz is somewhat mitigated by the low gain, while the GFL IBRs exhibit negative damping with high gain beyond approximately $10$~Hz. Finally, the GFL IBR with delay exhibits droop with incorrect sign (i.e., $180^\circ$ phase shift) at approximately $19$~Hz and has been associated with a $18-20$~Hz oscillation events~\cite{DWT+2023}.
 
\subsection{Example: Transmission lines}\label{subsec:exampleline}
The data-driven approach can also be used to obtain Bode plots for transmission line models. To this end, the transfer function from angle difference $\Delta \theta_{n,k}$ to active power $\Delta \bar{p}_{n,k}(s)$ has been obtained in EMT simulations for line models of various fidelity. These results can be used to calibrate the weight $W_m(s)$ used to model higher-order dynamics beyond the reduced-order model $\mu(j\omega_p)$. Results for lumped-parameter, distributed-parameter, and frequency-dependent-parameter models are shown in Fig.~\ref{fig:lines-diagonals} for short and long lines at three voltage levels. Across all three voltage levels, it can be seen that the distributed-parameter and frequency-dependent-parameter models capture high-order resonances not captured by the lumped-parameter model (i.e., $\mu(s)$). As the voltage level and/or line length increases, larger resonances start to appear at lower perturbation frequencies. Moreover, the results indicate that the second-order dynamic line model \eqref{eq:line-dyn} captures dynamics reasonably well for high-voltage transmission and short lines while the frequency-dependent additive uncertainty in \eqref{eq:muaddunc} can be used to capture higher-order dynamics for long lines and extra-high voltage transmission.

\section{Stability under interconnection\label{sec:stability}}
\subsection{Decentralized Stability Certificates}
To ensure scalability, we focus on decentralized stability conditions that can be verified for every unit separately while ensuring stability under interconnection through a network parameretrized using a few key parameters. To state the decentralized stability condition, we define the constant $\gamma_n \coloneqq 2 \sum_{(n,k) \in \mc E} \frac{\omega_0}{\ell_{n,k}} V_n V_k$, where $\ell_{n,k}\in\mathbb{R}_{>0}$ denotes the inductance of the line connecting buses $n$ and $k$ that captures the coupling strength between bus $n$ and the remaining buses. Notably, $\gamma_n$ can be bounded by $\bar{\gamma} \coloneqq 2  \frac{e_{\max}}{\ell_{\min}} \omega_0 V^2_{\max}$, where $e_{\max} \in \mathbb{N}$ is an upper bound on the number of outgoing edges of all nodes, $\ell_{\min} \in \mathbb{R}_{>0}$ is a lower bound on the line impedances, and $V_{\max}$ is the maximum voltage magnitude. 
\begin{theorem}{\bf(Decentralized stability condition)}\label{th.stability}
    Assume that all poles of $\mu(j\omega_p)m_{p,n}(j\omega_p)$ are in the open left half-plane for all $n \in \mc N$. The overall frequency dynamics shown in Fig.~\ref{fig:smallsignalre} are asymptotically stable if there exists $\alpha \in [0,\pi/2)$ such that 
    \[ \operatorname{Re}\left\{e^{j(\alpha-\frac{\pi}{2})}(e^{j\frac{\pi}{2}}+\tfrac{\gamma_n}{\psi_n \omega_p} \mu(j\omega_p) m_{p,n}(j\omega_p))\right\}>0\]
    holds for all $n \in \mc N$ and $\omega_p \in \mathbb{R} \cup \{\infty\}$.
\end{theorem}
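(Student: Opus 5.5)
The plan is to write the closed loop of Fig.~\ref{fig:smallsignalre} as a negative feedback interconnection of a block-diagonal ``unit'' operator and a symmetric ``network'' operator, and then apply a generalized Nyquist / phase-type argument in the spirit of \cite{PM+2019}. Under Assumptions~\ref{ass.coupling} and~\ref{ass.constant.ratio}, the loop from angles to injections is $\Delta p = \mu(s) L \Delta\theta$. The unit response is $\Delta\omega = -\Psi^{-1}G_{\omega,p}(s)\Delta p$ (normalized by ratings), and $\Delta\theta = \tfrac{1}{s}\Delta\omega$. The return ratio is therefore
\[
\tfrac{1}{s}\,\mu(s)\,\Psi^{-1}G_{\omega,p}(s)\,L .
\]
Stability of the closed loop is equivalent to $\det\bigl(sI + \mu(s)\Psi^{-1}G_{\omega,p}(s)L\bigr)$ having no zeros in the closed right half-plane. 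The zero eigenvalue of $L$ (common angle mode) is handled by working on the subspace orthogonal to $\mathbb{1}$, or by treating frequency deviations modulo a uniform shift. By the pole assumption on $\mu m_{p,n}$, the open loop has no unstable poles apart from the integrator, so a Nyquist argument applies along the $j\omega_p$ axis.

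The key step is to bound the spectrum of $\mu(j\omega_p)\Psi^{-1}G_{\omega,p}(j\omega_p)L$. First I would symmetrize by factoring $L = BKB^\mathsf{T}$ and forming $K^{1/2}B^\mathsf{T}\,D(j\omega_p)\,BK^{1/2}$, where $D = \diag\{\mu m_{p,n}/\psi_n\}$. This has the same nonzero eigenvalues. Any eigenvalue $\lambda$ with unit eigenvector $x$ then satisfies
\[
\lambda = \sum_n d_n\,|(BK^{1/2}x)_n|^2 .
\]
Hence $\lambda$ lies in the cone generated by the $d_n$, scaled by the weights $w_n=|(BK^{1/2}x)_n|^2\ge 0$. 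Gershgorin-type bounds give $\sum_n w_n \le \lambda_{\max}(L)$ and per-node weights controlled by degrees, which yields $w_n \le \gamma_n$. The factor $2$ in $\gamma_n$ is exactly the Gershgorin radius plus the diagonal, $2\sum_k \kappa_{nk}$. So each eigenvalue of $j\omega_p I+\mu\Psi^{-1}G L$ has the form
\[
j\omega_p + \sum_n \theta_n \gamma_n d_n, \qquad \theta_n\ge0,\quad \textstyle\sum\theta_n\le1 .
\]
Dividing by $\omega_p$ and rotating by $e^{j(\alpha-\pi/2)}$, this becomes a convex combination of the per-node quantities $e^{j(\alpha-\pi/2)}\bigl(e^{j\pi/2}+\tfrac{\gamma_n}{\psi_n\omega_p}\mu m_{p,n}\bigr)$, plus a residual multiple of $e^{j(\alpha-\pi/2)}e^{j\pi/2}=e^{j\alpha}$. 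That residual has positive real part because $\alpha<\pi/2$. The hypothesis therefore implies every such eigenvalue lies in an open half-plane not containing $0$.

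To conclude, I would use a homotopy argument: scale the network by $\tau\in[0,1]$, i.e.\ $L\to\tau L$. The half-plane bound holds uniformly in $\tau$, since $\tau\gamma_n$ only shrinks the weights and the residual term absorbs the slack. Thus no closed-loop eigenvalue of the characteristic equation crosses the imaginary axis as $\tau$ increases from $0$, where the system is a set of decoupled integrators with the zero mode removed. Points $\omega_p=0$ and $\omega_p=\infty$ are included in the hypothesis to close the Nyquist contour, and negative $\omega_p$ follows by conjugate symmetry.

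I expect the main obstacle to be the convex-combination step. Making the weights $\theta_n$ genuinely sum to at most one, with $\gamma_n$ as the correct normalizer, requires a careful Gershgorin / Rayleigh-quotient bound of the node-weighted Laplacian. The residual $j\omega_p$ term must also be kept with the correct sign for both $\omega_p>0$ and $\omega_p<0$. If the direct bound fails, I would instead invoke the decentralized result of \cite{PM+2019} verbatim, with $\mu(s)m_{p,n}(s)$ in place of their unit transfer function, since the network structure $\mu(s)L$ is exactly their setting with a scalar dynamic multiplier absorbed into the unit.
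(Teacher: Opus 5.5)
Your route is essentially the paper's. The paper rearranges the loop into the diagonal unit block $\mu(s)\Psi^{-1}G_{\omega,p}(s)$ in feedback with $\tfrac{1}{s}L$, and then cites \cite[Th.~1, Lem.~5]{PM+2019} with the half-plane rotated by $e^{-j\pi/2}$. That is exactly your fallback. Your main line is a self-contained re-derivation of the cited result: you localize the spectrum of $DL$ through its numerical range and then run a homotopy in the network gain. As written, however, two steps do not go through.

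\textbf{The convex-combination step.} You cannot get it from what you state. The bounds $w_n\le\gamma_n$ and $\sum_n w_n\le\lambda_{\max}(L)$ only give $\sum_n w_n/\gamma_n\le\lambda_{\max}(L)/\min_n\gamma_n$, and this can exceed one when node degrees are heterogeneous. What you need is the weighted bound
\[
\sum_n w_n/\gamma_n = x^*K^{1/2}B^\mathsf{T}\Gamma^{-1}BK^{1/2}x \le \lambda_{\max}(\Gamma^{-1/2}L\Gamma^{-1/2})\le 1, \qquad \Gamma:=\diag\{\gamma_n\},
\]
that is, $L\preceq\Gamma$. This is where your Gershgorin intuition belongs. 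The matrix $\Gamma^{-1}L$ has diagonal entries $1/2$ and off-diagonal absolute row sums $1/2$. It is also similar to the symmetric positive semidefinite matrix $\Gamma^{-1/2}L\Gamma^{-1/2}$, so its spectrum lies in $[0,1]$; equivalently, $\Gamma-L$ is a signless Laplacian. This is the scaling of $L$ the paper takes from \cite[Fig.~4]{PM+2019}. The paper's remark that $\bar\gamma\ge\lambda_{\max}(L)$ only covers the uniform-$\bar\gamma$ version. With $\theta_n:=w_n/\gamma_n$, your rotation argument then works for every $\omega_p\neq 0$, and the scaled network $\tau L$ simply gives $\sum_n\theta_n\le\tau$.

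\textbf{The homotopy's starting point.} At $\tau=0$, all $|\mc N|$ angle roots of $\det(sI+\tau D(s)L)$ sit at $s=0$, not just the common mode. So showing that no root crosses $j\mathbb{R}\setminus\{0\}$ proves nothing by itself. You also need two facts:
\begin{itemize}
\item For small $\tau>0$, the $|\mc N|-1$ non-common roots move to approximately $-\tau\lambda_i(D(0)L)$. These lie in the open left half-plane because $D(0)=\diag\{\mu_0 m_{p,n}(0)/\psi_n\}$ is positive diagonal (positive steady-state droop), so $D(0)L$ is similar to $D(0)^{1/2}LD(0)^{1/2}\succeq 0$.
\item For all $\tau\in(0,1]$, the root at the origin stays simple. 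This holds because $\mathbf{1}$ is a right null vector of $D(s)L$ for every $s$, and $D(0)L$ has a simple zero eigenvalue on a connected network. The remaining root is the common angle mode, which cancels in the map to $\Delta\omega$.
\end{itemize}
Your appeal to conjugate symmetry for $\omega_p<0$ is fine for this determinant-based argument, since $\overline{\det(j\omega I+D(j\omega)L)}=\det(-j\omega I+D(-j\omega)L)$. With these two repairs, you have a complete, citation-free proof of the special case of the Pates--Mallada theorem that the paper invokes.
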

\begin{figure}[t]
  \centering
  \includegraphics[width=0.7\columnwidth]{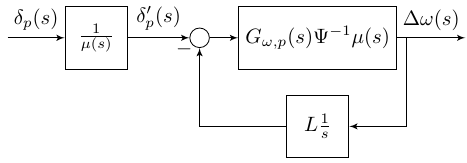}
  \caption{Rearranged small-signal frequency dynamics for a system with constant resistance-inductance ratio. \label{fig:smallsignalre}}
\end{figure} 

The proof follows by applying elementary block operations to obtain the block diagram shown in Fig.~\ref{fig:smallsignalre}. Because $\mu(s)^{-1}$ is asymptotically stable, it remains to show that the closed loop from $\delta^\prime_p(s)$ to $\Delta \omega(s)$ is stable. This follows from~\cite[Th.~1, Lem.~5, ]{PM+2019}, scaling $L$ as in~\cite[Fig.~4]{PM+2019}, noting that $\bar{\gamma}$ is an upper bound on the largest eigenvalue of the graph Laplacian $L$, and using the half-plane constraint in~\cite[Sec.~III-C]{PM+2019}. Notably, using $\tfrac{1}{j\omega_p} = \tfrac{1}{\omega_p}e^{-j\frac{\pi}{2}}$ in \cite[Lem.~5]{PM+2019} results in $\operatorname{Re}\left\{e^{j\alpha}(1+ e^{-j\tfrac{1}{2}\pi}\tfrac{\gamma_n}{\psi_n \omega_p} \mu(j\omega_p) m_{p,n}(j\omega_p))\right\}>0$. Factoring out $e^{-j\tfrac{1}{2}\pi}$ rotates the half-plane that defines the stable region. 


Theorem~\ref{th.stability} requires that the Nyquist plot of the dynamics 
\begin{equation} \label{eqn:interconnection}
  h_n(j\omega_p) = \tfrac{\gamma_n}{\psi_n \omega_p} \mu(j\omega_p) m_{p,n}(j\omega_p),
\end{equation}
for each bus $n$ are contained within a half-plane going through $(0,-1)$ with angle $\alpha$. We emphasize that, if $\gamma_n$ is not known for a given bus $n$, it can be replaced with the upper bound $\bar{\gamma}$ for the purpose of certifying stability.
\begin{figure}[b]
  \centering
  \includegraphics[width=\columnwidth]{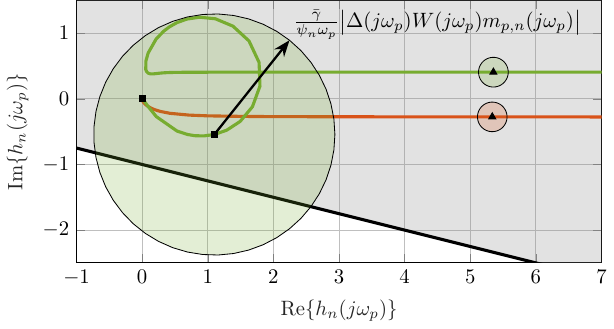}
  \caption{
    The Nyquist plot of $h_n(j\omega_p)$ for a GFM IBR with inner-loops (\drawline[orange1, line width=1.3pt]) and 
    GFL IBR (\drawline[green, line width=1.3pt]).  Uncertainty arising from line dynamics is shown for $f_p = 65$~Hz (\protect\showpgfsquare) and $f_p = 0.3$~Hz (\protect\showpgftriangle). The network parameters are $\bar{\gamma}=20.3$, $\ell_{\min}=0.197$~p.u. 
    , and $\psi=0.1$. 
    \label{fig:stability-nyquist-1}
  }
\end{figure}

Fig.~\ref{fig:stability-nyquist-1} illustrates the stability condition defined by Theorem~\ref{th.stability} along with the bus dynamics of a GFM and GFL IBR, which were obtained from the data shown in Fig.~\ref{fig:dynamics-results}.  Since the same half-plane bounds both Nyquist plots, the decentralized stability condition is satisfied for each connected unit and their stable interconnection on the same network is guaranteed.  

However, when considering uncertainty in the network circuit dynamics, the GFL IBR no longer satisfies the stability conditions of Theorem~\ref{th.stability}. In particular, accounting for uncertainty in the network circuit dynamics corresponds to checking the decentralized stability condition for the transfer function
  \begin{align*}
 \frac{\gamma_n}{\psi_n \omega_p} \big[\mu(j\omega_p) m_{p,n}(j\omega_p) + \Delta(j\omega_p)W(j\omega_p)m_{p,n}(j\omega_p)\big].
  \end{align*}
Notably, a circle with radius $\tfrac{\gamma_n}{\psi_n \omega_p} \big| W(j\omega_p)m_{p,n}(j\omega_p)\big|$ can be added to the Nyquist plot of the nominal transfer function for every $\omega_p$, as illustrated in Fig.~\ref{fig:stability-nyquist-1}.  It can be seen that the GFM IBR is significantly less sensitive to uncertainty in the line dynamics. For higher frequencies, the uncertainty $\big|W(j\omega_p)\big|$ is constant while the magnitude of the dynamic droop coefficient $|m_{p,n}(j\omega_p)|$ and $\frac{\gamma_n}{\psi_n\omega_p}$ are decreasing. Thus, the uncertainty does not significantly impact the GFM IBR.  For the GFL IBR response, the dynamic droop coefficient increases in magnitude as $\omega_p$ increases and, if $|m_{p,n}(j\omega_p)|$ increases faster than $|j\omega_p|$, the uncertainty grows unbounded as $\omega_p$ increases. In our example, $|m_{p,n}(j\omega_p)|$ grows at approximately the same rate as $|j\omega_p|$ resulting in significantly reduced robustness to unmodeled dynamics. Furthermore, the network parameter $\gamma_n \leq \bar{\gamma}$ and rating $\psi_n$ will also impact both the nominal Nyquist plot and circle modeling uncertainty.

\subsection{Interpretation of $\alpha$: trading off gain and phase}
The decentralized stability condition of Theorem~\ref{th.stability} imposes restrictions on the bus dynamics that change with $\alpha$. Notably, $\alpha$ can be understood as parameterizing a trade-off between gain and phase bounds on $h_n(j\omega_p)$ and, implicitly, $m_{p,n}(j\omega_p)$. While $\alpha$ can be chosen to accommodate different unit dynamics (e.g., IBR controls), we focus on the lower and upper limits of $\alpha$, which we refer to as the low-gain and passive case, respectively (see Fig.~\ref{fig:nyquist-cases}).  
%
\begin{figure}[b!!] 
    \centering
    \includegraphics[width=\linewidth]{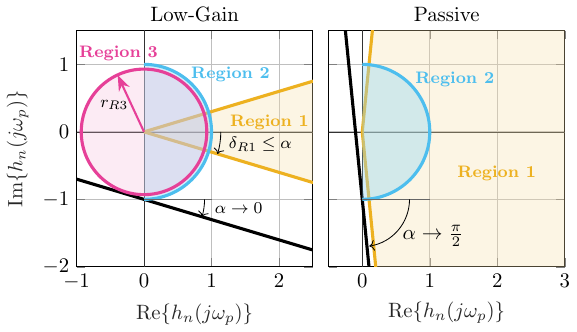}
    \caption{Simplified sufficient conditions can be constructed from Theorem~\ref{th.stability} for $h_n(j \omega_p)$ with low gain at high frequency (left) and passive $h_n(j \omega_p)$ (right) that correspond to  $\alpha \to 0$ (left) and $\alpha \to \frac{\pi}{2}$ (right).\label{fig:nyquist-cases}}
\end{figure}
In addition, we map the limits on the Nyquist plot of $h_n(j\omega_p)$, which includes line and unit dynamics, to sufficient conditions on the Bode plot of $m_{p,n}(j\omega_p)$ to obtain decentralized stability conditions on the unit dynamics that are parameterized in key network parameters.

To this end, we partition the admissible region of the Nyquist plot into three regions  (see Fig.~\ref{fig:nyquist-cases}).  Region 1 (R1) captures strictly dissipative bus dynamics, where $\angle h_n(j\omega_p) \leq \delta_{R1}$ and $0 \leq \delta_{R1} \leq \alpha$, that never cross the boundary imposed by Theorem~\ref{th.stability}. Region 2 (R2) is a semicircle that imposes the gain limit $| h_n(j\omega_p)| < 1$ and dissipativity condition $|\angle h_n(j \omega_p)| \leq \frac{\pi}{2}$, a relaxation of the strict dissipativity condition of R1.  Finally, Region 3 (R3) is a circle with radius $r_{R3} = \cos(\alpha)$ that encodes the gain bound $| h_n(j\omega_p)| < \cos(\alpha)$ but does not require dissipativity such that there are no restrictions on phase. It should be noted that as $\alpha \to \frac{\pi}{2}$, $r_{R3} \to 0$.  Thus, the transition from the low-gain to the passive case encodes a trade-off between the sizes of R1 and R3 and can be used to tailor the conditions of Theorem~\ref{th.stability} to different unit (e.g. IBR control) characteristics.

\subsection{Conditions on unit dynamic droop coefficients}
The regions in Fig.~\ref{fig:nyquist-cases} that restrict the bus dynamics \eqref{eqn:interconnection} can be mapped to the Bode plot of $m_{p,n}(j\omega_p)$ to clarify how the decentralized stability condition restricts the permissible unit dynamics, i.e., the dynamic droop coefficient. To this end, we align the regions in Fig.~\ref{fig:nyquist-cases} with different frequency ranges and, leveraging the bounds \eqref{eq:mubound} on the (normalized) line dynamics, map them to the Bode plot of $m_{p,n}(j\omega_p)$.

\subsubsection{High-frequency low-gain case ($\alpha \to 0$)}\label{subsec:lowgain}
The boundary between R1 and R2 for the low-gain case is selected as the lowest frequency $f_\alpha$ for which $| \angle \mu(j 2\pi f_{\alpha}) + \angle m_{p,n}(j 2\pi f_{\alpha})| = \alpha$. Moreover, R2 and R3 are separated by the lowest frequency $f^{\prime \prime}$ for which $|\angle \mu(j 2\pi f^{\prime \prime}) + \angle m_{p,n}(j 2\pi f^{\prime \prime})| = \tfrac{1}{2}\pi$. In other words, a phase bound is enforced for $f_p \leq f_\alpha$, the mixed magnitude and phase bounds of R2 are enforced for $f_\alpha \leq f_p \leq f^{\prime \prime}$, and a gain bound is enforced for $f_p \geq f^{\prime \prime}$ once dissipativity of $h_n(j\omega_p)$ is lost. To obtain bounds on the Bode plot of $m_{p,n}(j\omega_p)$, we further require the frequency $f_\epsilon$ given by \eqref{eq:mubound:feps} after which the gain of the line dynamics deviates from its steady-state value.

Bounding $h_n(j\omega_p)$ by R1 for $f_p \leq f_\alpha$ requires that
\begin{align}\label{eq:bus-dyn-R1}
  -\alpha \leq \angle \mu(j 2\pi f_p) + \angle m_p(j 2\pi f_p) \leq \alpha.
\end{align}
Using $-\delta_\epsilon = \angle \mu(j 2\pi f_\epsilon)$ this implies that $h_n(j\omega_p)$  is in R1 if
\begin{align*}
  -\alpha + \delta_\epsilon \leq \angle m_{p,n}(j 2\pi f_p) \leq \alpha.
\end{align*}
Bounding $h_n(j\omega_p)$ by R2 for $f_{\alpha} < f_p \leq f^{\prime \prime}$ requires that
\begin{subequations}\label{eq:bus-dyn-R2}
  \begin{align}
  \tfrac{\gamma_n}{\psi_n \omega_p} \big|\mu(j 2\pi f_p)\big| \big|m_{p,n}(j 2\pi f_p)\big| < 1, \label{eq:bus-dyn-mag-limit-R2}\\
  -\tfrac{1}{2}\pi \leq \angle \mu(j 2\pi f_p) + \angle m_{p,n}(j 2\pi f_p) \leq \tfrac{1}{2}\pi.
\end{align}
\end{subequations}
Assuming that $f^{\prime\prime} \leq f_\epsilon$, then \eqref{eq:mubound:feps} and $  -\delta_\epsilon \leq \angle \mu(j 2\pi f_p) \leq 0$ holds. Thus, restricting the bus dynamics to 
\begin{align*}
  |m_{p,n}(j 2\pi f_p)| &< \tfrac{\psi_n \omega_p}{\gamma_n (\mu_0 + \epsilon_l)},\\
  -\tfrac{1}{2}\pi + \delta_\epsilon \leq \angle m_{p,n}(j 2\pi f_p) &\leq \tfrac{1}{2}\pi,
\end{align*}
ensures that $h_n(j\omega_p)$ is in R2.
\noindent
Finally, bounding $h_n(j\omega_p)$ by R3, where $f_p > f^{\prime \prime}$ requires that
\begin{align}\label{eq:bus-dyn-limit-R3}
  \tfrac{\gamma_n}{\psi_n \omega_p} |\mu(j 2\pi f_p)| |m_{p,n}(j 2\pi f_p)| < \cos(\alpha).
\end{align}
Considering \eqref{eq:mubound}, the peak gain $|\mu(j2\pi f_0)| = \hat{\mu}$, and $|\mu(j 2\pi f_p)| \leq \hat{\mu} + \epsilon_l$ for all $f_p > f_{\epsilon}$, the unit is required to dampen the resonant peak from the line dynamics such that
\begin{subequations}\label{eq:device-mag-limit-R3}
  \begin{align}
  |m_{p,n}(j 2\pi f_p)| &\leq \frac{\cos(\alpha)\psi_n \omega_p}{\gamma_n (\mu_0 + \epsilon_l)}, \quad &f^{\prime \prime} &< f_p \leq f_{\epsilon}\\
  |m_{p,n}(j 2\pi f_p)| &\leq \frac{\cos(\alpha)\psi_n \omega_p}{\gamma_n (\hat{\mu} + \epsilon_l)}, \quad &f_{\epsilon} &< f_p. 
\end{align}
\end{subequations}
The restrictions imposed on the unit dynamic droop coefficient $m_{p,n}(j 2\pi f_p)$ in the low-gain case are illustrated in the Bode plot in Fig.~\ref{fig:device-low-gain}.  We note that the values of $\epsilon_l$ and $\alpha$ will impact the bounds and frequency ranges.
\begin{figure}[b] 
    \centering
    \includegraphics[width=\linewidth]{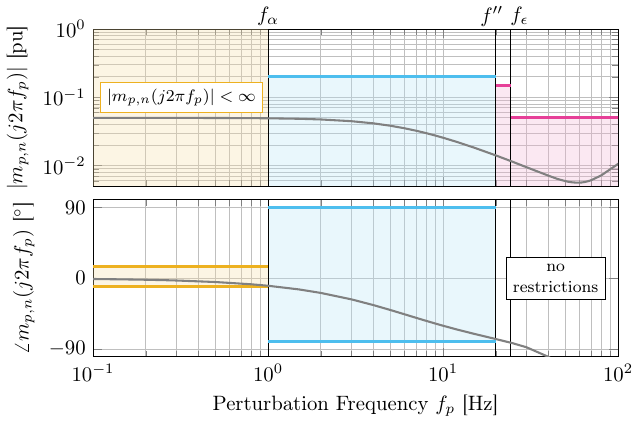}
    \caption{Illustration of the restrictions imposed on the dynamic droop coefficient corresponding to R1, R2, and R3 for the low-gain case, where $\alpha = 15^\circ$, $\epsilon_l = 20\%$, and $\delta_\epsilon = 4.5^\circ$.  A droop-controlled GFM IBR with inner-loops is shown to illustrate compliance with the low-gain conditions.
    \label{fig:device-low-gain}}
\end{figure}

\subsubsection{Passive case ($\alpha \to \tfrac{1}{2}\pi$)}
\begin{figure}[t] 
    \centering
    \includegraphics[width=\linewidth]{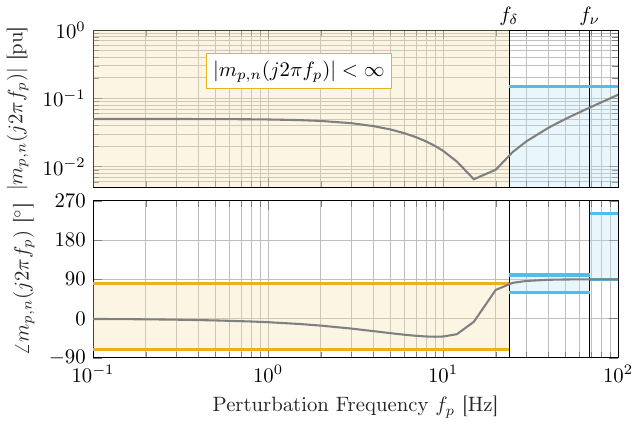}
    \caption{Illustration of the restrictions imposed on the dynamic droop coefficient corresponding to R1 and R2 for the passive case, with $\alpha = 90^\circ - \delta_\alpha$, $\delta_\alpha = 10^\circ$, $\delta_l = 4.5^\circ$, and $\nu_l = \delta_\alpha + \delta_l + 10^\circ$. A droop-controlled GFM IBR without inner-loops is shown to illustrate compliance with the passive conditions.    \label{fig:device-passive}}
\end{figure}
In the passive case, the unit compensates for the $180^\circ$ drop in phase at the resonance frequency of the normalized line dynamics $\mu(j 2\pi f_0)$ to meet the conditions of Theorem~\ref{th.stability}.  To this end, the boundary between R1 and R2 is placed at the lowest frequency $f_{\delta}$ for which $\angle \mu(j 2\pi f_\delta) = -\delta_l$ while R3 is not used. R2 is further partitioned using the lowest frequency $f_\nu$ such that $\angle \mu(j 2\pi f_\nu) = \nu_l -\pi$, i.e., the frequency at which the line dynamics approach $-180^\circ$ phase. The unit is expected to decrease its dynamic droop gain and increase its dynamic droop phase around this frequency to counter the destabilizing effect of the line dynamics.

For R1, where $f_p \leq f_\delta$, the line model satisfies $-\delta_l \leq \angle \mu(j 2\pi f_p) \leq 0$. Hence, the unit dynamics are restricted to
\begin{align}\label{eq:bus-dyn-R1-passive}
  -\alpha +\delta_l \leq \angle m_{p,n}(j 2\pi f_p) \leq \alpha,
\end{align}
such that $h_n(j\omega_p)$ remains in R1.
For R2, where $f_p > f_\delta$, the line model satisfies $|\mu(j 2\pi f_p)| \leq \hat{\mu} + \epsilon_l$ and the phase bounds (see \eqref{eq:mubound})
\begin{align*}
  -\pi + \nu_l \leq &\angle \mu(j 2\pi f_p) \leq -\delta_l,& \quad f_\delta &< f_p \leq f_\nu,\\
  -\pi  \leq &\angle \mu(j 2\pi f_p) < -\pi  + \nu_l,& \quad f_\nu &< f_p.
\end{align*}
Thus, if the dynamic droop coefficient satisfies
\begin{subequations}
  \begin{align}
      \!\!\!  |m_{p,n}(j 2\pi f_p)| &< \tfrac{\psi_n \omega_p}{\gamma_n (\hat{\mu} + \epsilon_l)}, &\; f_\delta &< f_p  \label{eq:device-mag-limit-R2}\\ 
  \!\!\!\tfrac{1}{2}\pi-\nu_l < \angle m_{p,n}(j 2\pi f_p) &\leq \tfrac{1}{2}\pi +\delta_l, &\; f_\delta &< f_p \leq f_\nu\\
\!\!\!\tfrac{1}{2}\pi < \angle m_{p,n}(j 2\pi f_p) &\leq \tfrac{3}{2}\pi -\nu_l, &\; f_\nu &< f_p,
\end{align}
\end{subequations}
then $h_n(j\omega_p)$ is in R2. The restrictions imposed on the unit dynamic droop coefficient $m_{p,n}(j 2\pi f_p)$ in the passive case are illustrated in the Bode plot in Fig.~\ref{fig:device-passive}.  We note that the values of $\delta_l$ and $\nu_l$ will impact the requirements.

\section{Performance Specifications\label{sec:performance}}
While the previous conditions guarantee small-signal stability, they do not imply good control performance or provision of ancillary services. In the context of this work, performance refers to the disturbance rejection of a unit at its point of interconnection and discussions about performance are focused on the impact of active power disturbances on the frequency of the ac voltage at the unit terminal.

Firstly, steady-state droop requires that the gain and phase is within a small tolerance $\epsilon_d\in \mathbb{R}_{>0}$ and $\delta_d$ (e.g., obtained from standards or grid-codes) of the desired droop constant $m_{p,0}$ up to some frequency $f_d$. In other words, for $f_p \leq f_d$, we require that
\begin{subequations}
\begin{align} \label{eq:perf-spec-1}
  m_{p,0} - \epsilon_{d} &\leq |m_{p,n}(j2\pi f_p)| \leq m_{p,0} + \epsilon_{d},\\
    -\delta_d &\leq \angle m_{p,n}(j2\pi f_p) \leq \delta_d.
\end{align}
\end{subequations}
Another performance specification that may be imposed is the transient damping specification
\begin{subequations}\label{eq:perf-spec-2}
  \begin{align} 
  |m_{p,n}(j2\pi f_p)| &\leq \overline{m}_{p,n}, \label{eq:perf-spec-2:gain} \\
    -\tfrac{1}{2}\pi \leq  \angle m_{p,n}(j2\pi f_p) &\leq \tfrac{1}{2}\pi,
\end{align}
\end{subequations}
which requires the unit to maintain a bounded gain from power oscillations to frequency oscillations beyond steady-state droop.

Finally, an inertia-like response can be specified by requiring that $|m_{p,n}(j2\pi f_p)|$ decreases ten times for a ten times frequency increase beyond a cut-off frequency $f_c$ (see Fig.~\ref{fig:perf-bode}). To illustrate the interpretation of this requirement, consider the GFM VSM control with virtual inertia constant $H$ and damping $D=m_{p,0}^{-1}$ modeled by the transfer function
\begin{align} \label{eq:VSM}
  g_{\text{VSM}}(s) = \frac{1}{2Hs+D} = \frac{1}{D}\frac{1}{\frac{2H}{D}s+1}
\end{align}
between per unit active power and per unit IBR frequency.  Notably, beyond the cut-off frequency $f_c = \pi D/H$, the gain of \eqref{eq:VSM} decreases ten times for a ten times frequency increase. 

For typical parameters, the performance specifications are more restrictive than the decentralized stability conditions for the low-gain case (see Fig.~\ref{fig:perf-bode}). For $f_p > f^{\prime\prime}$ the stability conditions already require decreasing gain, which reduces the amplitude of frequency oscillations, and hence no additional performance specifications are required.

\begin{figure}[t]
    \centering
    \includegraphics[width=3.5in]{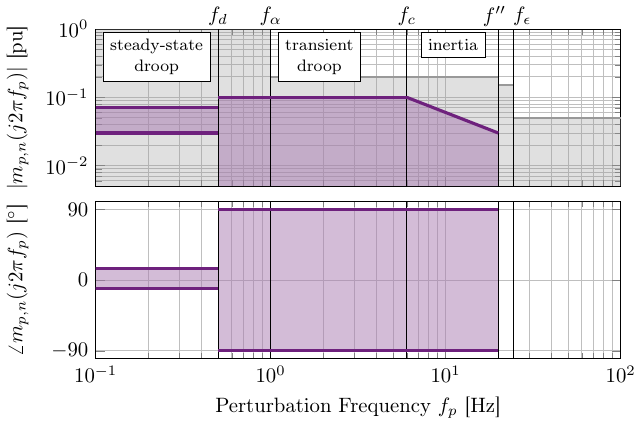}
    \caption{Performance specifications (purple) impose stricter requirements on the low-frequency unit dynamics than the low-gain stability condition (grey).
      \label{fig:perf-bode}}
\end{figure}

\section{Numerical Examples\label{sec:examples}}
In this section, we provide examples that illustrate use of the stability and performance specifications and the data-driven verification method to evaluate unit dynamics.

\subsection{Minimum network impedance for stability}

First, we consider IBRs that meet the performance specifications and illustrate that increasing network impedance increases the range of droop coefficients for which stability can be certified using the bounds proposed in Sec.~\ref{sec:stability}. To this end, we focus on the low-gain case, where $\alpha \to 0$, and note that the stability conditions in R1 given by \eqref{eq:bus-dyn-R1} and \eqref{eq:bus-dyn-R1-passive} do not depend on $\gamma_n$, making them independent of the network impedance. Next, we replace $\gamma_n$ with its upper bound $\bar{\gamma}$ in the stability conditions. It remains to investigate the impact of $\ell_{\min}$ in R2 and R3. Typically, in the low-gain case $f_\alpha \leq f_c  \leq f^{\prime\prime} \leq f_\epsilon$ and in the passive case $f_\delta \leq f_\epsilon$. Then, substituting \eqref{eq:perf-spec-2} into \eqref{eq:bus-dyn-mag-limit-R2}, using $\mu(j2\pi f_p) \leq \tfrac{1}{\rho^2+\omega_0^2}+\epsilon_l$, and using the definition of $\bar{\gamma}$, the R2 stability specification for $f_\alpha < f_p \leq f_c$ becomes
\begin{align}\label{eq:region21}
   \ell_{\min} > \frac{2e_{\max,n}}{\psi_n 2\pi f_\alpha} \left(\frac{1}{\rho^2+\omega_0^2}+\epsilon_l\right) \overline{m}_{p,n}.
\end{align}
Applying similar steps to \eqref{eq:bus-dyn-limit-R3}, the minimum line impedance for $f_c < f_p \leq f^{\prime\prime}$ is
\begin{align} \label{eq:region22}
  \ell_{\min} > \frac{1}{\cos(\alpha)}\frac{2e_{\max,n}}{\psi_n 2 \pi {f_c}}\left(\frac{1}{\rho^2+\omega_0^2} + \epsilon_l\right) \frac{\overline{m}_{p,n}}{\sqrt{(\frac{f^{\prime\prime}}{f_c})^2 + 1}}.
\end{align}
Finally, the bound on $|m_{p,n}(j2\pi f_p)|$ that defines R3 in \eqref{eq:device-mag-limit-R3} scales with $\bar{\gamma}^{-1}$ and increases linearly with $\ell_{\min}$.

Fig.~\ref{fig:contour1} illustrates that the minimum network inductance required to meet \eqref{eq:region21} increases with the maximum transient droop $\overline{m}_{p,n}$ and decreases as $\rho$ increases, i.e., as the network resistance increases. It should be noted that a larger threshold for $\epsilon_l$ will raise the lower limit on $\ell_{\min}$. To illustrate the trade-off between inertia, transient droop, and $\ell_{\min}$, for a GFM VSM we substitute $f_c = \pi D/H$ (see Sec.~\ref{sec:performance}) into \eqref{eq:region22} and express $f^{\prime\prime}$ as a function of $\rho$ by solving $|\mu(j2\pi f^{\prime\prime})| = \mu_0 + \epsilon_l$. The resulting minimum network inductance shown in Fig.~\ref{fig:contour2} increases with the maximum transient droop $\overline{m}_{p,n}$ and decreases as the inertia $H$ increases. Notably, minimum network inductance in the transient droop range $f_\alpha < f_p \leq f_c$ is significantly larger than that in the inertia range $f_c < f_p \leq f^{\prime\prime}$. This observation aligns with results for networks of homogeneous GFM IBRs which showed that the minimum stabilizing network inductance in many cases does not depend on the inertia characteristics but only on the transient droop characteristics~\cite{compensate}.
\begin{figure}[t]
  \centering
  \includegraphics[width=\columnwidth]{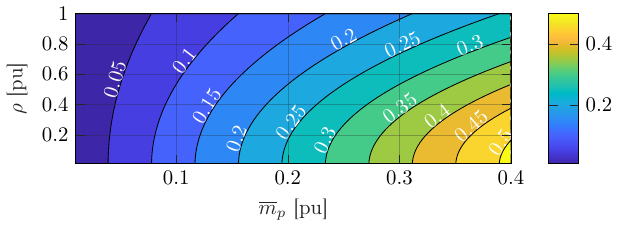}
  \caption{Minimum network inductance $\ell_{\min}$ in the frequency range of $f_\alpha < f_p \leq f_c$, resulting from the settings of $f_\alpha = 0.6$ Hz, $\psi = 1$, $e_{\max,n}=2$, $\epsilon_l = 0.1\mu_0$, and $\epsilon_d = 0.1m_{p,0}$.
    \label{fig:contour1}
  }
\end{figure}
\begin{figure}[t]
  \centering
  \includegraphics[width=\columnwidth]{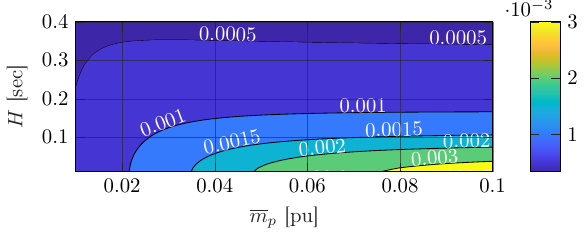}
  \caption{Minimum network inductance $\ell_{\min,n}$ for $f_c < f_p \leq f^{\prime\prime}$, where the unit is required to demonstrate an inertial response.  The results are based on $f_\alpha = 0.6$ Hz, $\psi = 1$, $e_{\max,n}=2$, $\epsilon_l = 0.1\mu_0$, $\epsilon_d = 0.1m_{p,0}$, and $\rho = 0.1$.
    \label{fig:contour2}
  }
\end{figure}
\subsection{Grid-forming vs. grid-following response}
The stability and performance conditions developed on the dynamic droop coefficient may also be used to delineate GFM and GFL dynamics. The most prevalent form of GFM control are droop and VSM control, which impose the linear relationship
\begin{align}        \label{eqn:gfm_c}
  \Delta \omega = -\frac{m_{p,0}}{\tau s + 1} \Delta p,
\end{align}
between the IBR active power injection and IBR frequency. Here $\tau \in \mathbb{R}_{>0}$ is a time constant used for filtering active power measurements or inertia emulation (i.e., the Bode magnitude plot roll off after $\tau^{-1}$) to reduce frequency oscillations.  These traits may be tuned to meet the local performance specifications outlined in the previous subsection. In contrast, common implementations of GFL control with $P\text{-}f$ droop utilize a SRF-PLL. The GFL IBR is controlled to track an active power reference obtained by $\Delta p =  -g_D(s)g_{PLL}(s) \Delta \omega$, where $g_{PLL}(s)= \frac{k_ps + k_i}{s^2 + k_ps + k_i}$ and $g_D(s)=\frac{D}{\tau_ds + 1}$. This results in 
%
    \begin{align} \label{eqn:gfl_c}
      \Delta \omega = -\frac{s^2 + k_ps + k_i}{k_ps + k_i} \frac{\tau_ds + 1}{D} \Delta p
    \end{align}
where $k_p \in \mathbb{R}_{>0}$ and $k_i \in \mathbb{R}_{>0}$ are the PLL gains, $\tau_d \in \mathbb{R}_{>0}$ is a filter time constant of a realizable differentiator for implementing frequency droop, and $D \in \mathbb{R}_{>0}$ is the damping constant. Note that the DC gain, i.e., steady-state droop gain, of the GFL response is given by $1/D$. However, \eqref{eqn:gfl_c} is improper and non-causal and the control gains cannot be tuned to satisfy the performance and stability specifications.

\begin{proposition}{\bf(SRF-PLL damping)}\label{prop.SRFPLL}
  Consider the SRF-PLL control \eqref{eqn:gfl_c}. For all coefficients $k_p$, $k_i$, and $\tau_d$, it holds that $|m_{p,n}(j 2\pi f_p)| \rightarrow \infty$ and $\angle m_{p,n}(j 2\pi f_p) \to \pi$ as $f_p \rightarrow \infty$.
\end{proposition}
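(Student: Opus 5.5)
Writing $m_{p,n}(s)$ as the transfer function in \eqref{eqn:gfl_c}, the plan is to evaluate it on the imaginary axis $s=j\omega_p$, $\omega_p=2\pi f_p$, and read off the high-frequency behaviour of each factor separately. Concretely, I would factor
\[
m_{p,n}(j\omega_p)=\tfrac{1}{D}\,\frac{N_1(j\omega_p)\,N_2(j\omega_p)}{N_3(j\omega_p)},
\]
with $N_1(s)=s^2+k_ps+k_i$, $N_2(s)=\tau_d s+1$, and $N_3(s)=k_ps+k_i$. All coefficients $k_p,k_i,\tau_d,D$ are strictly positive, and the argument must hold uniformly in these values.

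\textbf{Magnitude.} The magnitude is the easy part. I would note that
\[
|N_1(j\omega_p)|\geq \omega_p^2-k_i,\qquad |N_2(j\omega_p)|\geq \tau_d\omega_p,\qquad |N_3(j\omega_p)|\leq k_p\omega_p+k_i .
\]
Hence
\[
|m_{p,n}(j\omega_p)|\geq \frac{(\omega_p^2-k_i)\,\tau_d\omega_p}{D(k_p\omega_p+k_i)},
\]
which grows like $\tfrac{\tau_d}{Dk_p}\omega_p^2\to\infty$. Equivalently, the relative degree of $m_{p,n}$ is $-2$, i.e., it is improper by two.

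\textbf{Phase.} I would compute the phase as $\angle N_1+\angle N_2-\angle N_3$, taking each term along its continuous branch starting from $0$ at $\omega_p=0$ (all DC values are positive). This is the same convention used for the Bode plots in Fig.~\ref{fig:dynamics-results}.
\begin{itemize}
\item Since $N_3(j\omega_p)=k_i+jk_p\omega_p$, its phase is $\arctan(k_p\omega_p/k_i)\to\tfrac{\pi}{2}$.
\item Since $N_2(j\omega_p)=1+j\tau_d\omega_p$, its phase also tends to $\tfrac{\pi}{2}$.
\item For $N_1(j\omega_p)=(k_i-\omega_p^2)+jk_p\omega_p$, the imaginary part is strictly positive for all $\omega_p>0$. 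The point therefore stays in the open upper half-plane, and its continuous phase lies in $(0,\pi)$. As $\omega_p\to\infty$ the real part tends to $-\infty$ quadratically while the imaginary part grows only linearly, so the phase tends to $\pi$.
\end{itemize}
Summing gives $\angle m_{p,n}(j\omega_p)\to \pi+\tfrac{\pi}{2}-\tfrac{\pi}{2}=\pi$. As a cross-check, the leading-order asymptotics are
\[
m_{p,n}(j\omega_p)\sim \frac{(j\omega_p)^3\tau_d}{D\,k_p\, j\omega_p}=-\frac{\tau_d}{Dk_p}\omega_p^2,
\]
which is a negative real number of diverging modulus, consistent with both limits.

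\textbf{Main obstacle.} The only delicate point is the phase bookkeeping: showing that the limit is $\pi$ and not $-\pi$ or some other multiple, which is a matter of branch choice. I would handle this explicitly with the upper-half-plane argument for $N_1$ and the positivity of coefficients for $N_2$ and $N_3$. I would also remark that modulo $2\pi$ the conclusion, namely droop with reversed sign, is independent of the branch. Uniformity in $k_p,k_i,\tau_d$ is immediate because only their positivity is used.
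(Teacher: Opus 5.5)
Your proposal is correct and follows the same route as the paper. The paper's one-line proof simply observes that $m_{p,n}(s)$ from \eqref{eqn:gfl_c} has a degree-three numerator over a degree-one denominator (which it loosely calls relative degree two), so that $m_{p,n}(j\omega_p)\sim -\tfrac{\tau_d}{Dk_p}\omega_p^2$ at high frequency. Your factor-wise magnitude bound and continuous-branch phase bookkeeping make explicit what that argument leaves implicit, namely that the leading-coefficient ratio is positive and that the branch choice gives $\pi$ rather than $-\pi$.
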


\begin{proof}
  Let $m_{p,n}(s)=g_{\text{gfl}}(s)$. Then, the relative degree of $m_{p,n}(s)$ is two and it follows that $|m_{p,n}(j 2\pi f_p)| \rightarrow \infty$ and $\angle m_{p,n}(j 2\pi f_p) \to \pi$ as $f_p \rightarrow \infty$.   \hfill $\blacksquare$
\end{proof}
This contradicts the stability conditions \eqref{eq:device-mag-limit-R3} and \eqref{eq:device-mag-limit-R2} for the low-gain and passive case, respectively, and the inertia specification of the performance specifications. Due to PLL bandwidth limits, these violations will typically occur in the frequency range of interest. Moreover, modeling the full GFL IBR dynamics (see Fig.~\ref{fig:dynamics-results}) will typically result in more significant bound violations than the reduced-order model \eqref{eqn:gfl_c}.

Fig.~\ref{fig:gfm-vs-gfl} illustrates the dynamic droop coefficients for three IBR controls obtained in Sec.~\ref{subsec:example}.  It can be seen that both GFM IBRs meet all three performance specifications, while the GFL IBR only meets the steady-state and transient droop specifications. Notably, the GFL unit does not provide significant damping for high-frequency disturbances. Instead, the GFL response results in negative damping, since the phase is outside the range $[-90^\circ,90^\circ]$ for frequencies beyond $f^{\prime\prime}$. While the GFM IBR with inner loops also exhibits negative damping beyond approximately $f_\epsilon$, this is compensated by its low-gain (see Sec.~\ref{subsec:lowgain}). In contrast, the GFM IBR without inner loops does not exhibit negative damping but does result in larger frequency deviations beyond approximately  $f_\epsilon$.
\begin{figure}[b!]
    \centering
    \includegraphics[width=\columnwidth]{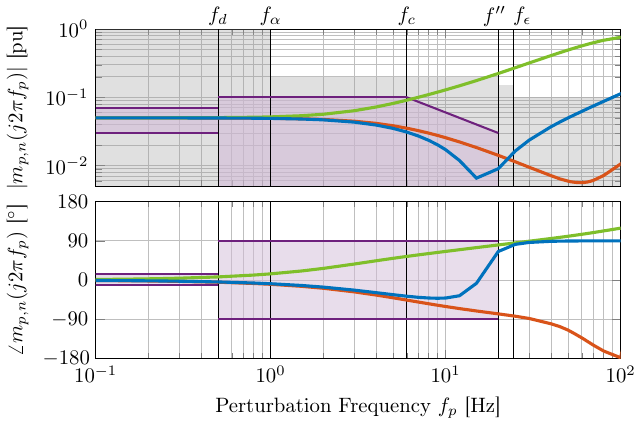}
    \caption{Bode plot of $m_{p,n}(j 2\pi f_p)$ for GFM with inner-loops (\drawline[orange1, line width=1.3pt]), 
    GFM with inner-loops (\drawline[blue, line width=1.3pt]), and 
    GFL (\drawline[green, line width=1.3pt]) for  $\alpha=15^\circ$ and $\delta_\epsilon = 4.5^\circ$.
    \label{fig:gfm-vs-gfl}
    }
\end{figure}

\subsection{Impact of line model on certifying stability}
Finally, we briefly illustrate how our framework can be used to explain prior results that show that considering network circuit dynamics can significantly impact stability of networks of GFM and GFL IBRs~\cite{GCB+2019,MSA+2021}.

\begin{figure}[t]
  \centering
  \includegraphics[width=3.5in]{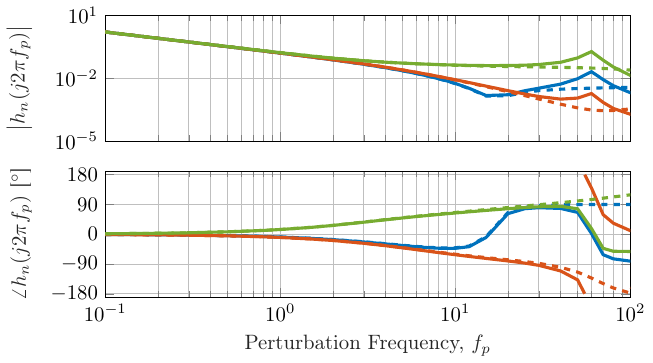}
  \caption{Bode plot of $h_n(j 2\pi f_p)$ for a GFM IBR with inner-loops (\drawline[orange1, line width=1.3pt]), GFM IBR without inner-loops (\drawline[blue, line width=1.3pt]), and   GFL IBR (\drawline[green, line width=1.3pt]), with (solid) and without (dashed) network circuit dynamics.
  \label{fig:inter_bode}}
\end{figure}

Fig.~\ref{fig:inter_bode} shows the Bode plot of $h_n(j 2\pi f_p)$ for three IBR controls obtained from the data in Sec.~\ref{subsec:example} with and without network circuit dynamics (i.e., replacing $\mu(j 2\pi f_p)$ with $\mu_0$). It can be seen that $h_n(j 2\pi f_p)$ for the GFL IBR is passive with the quasi-steady-state line model, but is passive with the dynamic line model. This supports the observation made in \cite{MSA+2021} that transmission line dynamics may improve the stability margins of GFL IBRs. On the other hand, the Nyquist plot of the GFL IBR response presented in Fig.~\ref{fig:stability-nyquist-1} highlights that the GFL response with line dynamics has smaller stability and robustness margin than that of both GFM IBRs. Moreover, $h_n(j 2\pi f_p)$ for the GFM IBR exhibits both increased gain and phase significantly outside $[-90^\circ,90^\circ]$ when considering line dynamics. This precludes meeting the stability conditions for the passive case and makes meeting the low-gain stability conditions challenging for networks with low impedance, which are known to pose stability challenges to GFM IBRs~\cite{GCB+2019}. Meanwhile, the GFM IBR without inner loops remains dissipative both with and without line dynamics and only exhibits an increase in gain that does not appear critical in light of the stability conditions for the passive case.

\section{Conclusion\label{sec:conclusions}}
In this work, we developed a modeling and analysis framework for certifying small-signal frequency stability of a network of heterogeneous units (e.g., IBRs and SGs) using decentralized stability conditions on data-driven input-output models that do not require knowledge of internal hardware or control structure. Moreover, we used the data-driven input-output models to evaluate the unit-level frequency control performance. We extended the notion of a steady-state droop coefficients to dynamic droop coefficients in order to evaluate frequency stability for networks of heterogeneous units considering network circuit dynamics.  Furthermore, the dynamic droop coefficients allow for formalizing grid support functions such as inertia emulation into verifiable requirements on the unit dynamics. Our decentralized stability condition and performance specifications were formulated as bounds on the Nyquist and Bode plots and apply to GFM IBRs, GFL IBRs, and SGs.  Developing conditions for voltage stability and recovering unit dynamics from hardware experiments are seen as important topics for future work.

\bibliographystyle{IEEEtran}
\bibliography{IEEEabrv,bibliography}

\vfill

\end{document}